\RequirePackage[T1]{fontenc}
\RequirePackage{fix-cm}
\RequirePackage{stix2}

\documentclass[pdftex,a4paper,12pt]{article}

\usepackage[numbers,square,sort]{natbib}
\usepackage{amsmath,amsthm}
\usepackage{hyperref}
\usepackage[pass]{geometry}

\numberwithin{equation}{section}
\theoremstyle{plain}%
\newtheorem{theorem}{Theorem}[section]
\newtheorem{proposition}[theorem]{Proposition}
\newtheorem{lemma}[theorem]{Lemma}

\theoremstyle{remark}%
\newtheorem{remark}[theorem]{Remark}

\theoremstyle{definition}%

\newtheorem{definition}[theorem]{Definition}

\begin{document}
\title{Simultaneous Reduction of Observables and Measured Qudits in Entanglement-Assisted Quantum Local Recovery}
\author{Ryutaroh Matsumoto\\
Department of Information and Communications Engineering,\\
Institute of Science Tokyo, 152-8550 Japan.}
\date{20 September 2026}
\maketitle
\begin{abstract}\noindent
  For a general entanglement-assisted or unassisted quantum error-correcting
  code, a set of erasures, and an associated repair group,
  we propose a linear algebraic procedure to compute
  a reduced set of observables and a reduced repair group of codeword qudits
for correcting the given erasures.
  The procedure has cubic complexity in the repair group size
  and the computed set of observables has the smallest possible
  size for correcting the given erasures.
  We specialize the general procedure to quantum codes
  constructed from Euclidean and Hermitian orthogonality,
  and provide closed-form upper bounds on the size of reduced
  repair groups for those special cases.
  Based on these closed-form bounds,
  we propose quantum counterparts of the information
  locality of classical local recovery.\\
  \textbf{Keywords:}  entanglement-assisted quantum error correction,
  erasure correction, local recovery\\
  \textbf{MSC 2020:} 81P73, 94B65, 94B35, 94B05
\end{abstract}

\section{Introduction}\label{sec1}
As demonstrated in recent experiments \citep{bluvstein24,google25},
quantum error correction is an essential ingredient
for large-scale quantum computers.
One of the recent trends in quantum error correction
is quantum erasure correction \citep{wu2022,kang2023}.
An \textit{erasure} in quantum and classical error correction means an error
whose position in a codeword is known \citep{bennett97,grassl97,pless98}.
Since an error-correcting code can correct twice as many erasures as errors,
knowledge of error positions greatly helps decoding.
As some physical devices allow identification of qubits with erasures in a codeword
without destroying encoded quantum information or
performing stabilizer measurements
\citep{wu2022,kang2023},
quantum erasure correction warrants dedicated investigation
alongside general quantum error correction.

Another recent trend in quantum error correction
is the reduction of measurements in decoding
\citep{perlin2023,heussen2024,veroni2024,zhou25}.
This is because
measurements cause disturbance to measured qubits on certain physical platforms,
and reducing measurements generally improves the reliability
of quantum error correction and fault-tolerant quantum computation.

Against this background,
it is natural to study the reduction of measurements in
quantum erasure correction,
which involves the following two objectives.
One objective is the reduction of the number of measured qudits
in quantum erasure correction.
This has been studied in quantum local recovery
\citep{golowich25,aqlrc26,luo25,li2025optimalquantumlrcshermitian,li2025improvedboundsoptimalconstructions,sharma25,xie25,bu2025quantumlocallyrecoverablecode,qlrc24,cao2025optimalquantumrdeltalocallyrepairable,zhou2025optimalquantumrdeltalocallyrepairable,cao2026matrix1,cao2026matrix2,galindo26qhbch,galindo2025optimalquantumlocallyrecoverable,hqlrc26}.
The other objective is the reduction of the number of observables
measured during quantum erasure correction,
which was recently studied in
\citep{qlrsurface25eprint}
with little consideration of reducing the number of measured codeword qudits.
Following \citep{qlrsurface25eprint},
\citet{ilocality26}
studied the simultaneous reduction of the numbers of measured observables and codeword qudits and proposed a linear algebraic procedure to compute
a reduced set of observables and a reduced set (called a repair group)
of codeword qudits to be measured for erasure correction.
However, the class of quantum error-correcting codes studied in
\citep{ilocality26} was limited to quantum stabilizer codes
\citep{ashikhmin00,calderbank98,gottesman96,ketkar06}
defined by Hermitian dual-containing linear codes over $\mathbf{F}_{q^2}$,
and no procedure was given for CSS codes defined by two
linear codes $C_X, C_Z \subseteq \mathbf{F}_q^n$ except the special case $C_X=C_Z$.
It is desirable to develop a procedure for computing
a reduced set of observables and a reduced set
of codeword qudits for general quantum stabilizer codes.

On the other hand,
in the construction of quantum stabilizer codes
by linear codes over finite fields,
one needs dual-containing (or equivalently self-orthogonal)
linear codes with control over their dimensions and minimum Hamming distances.
The construction of such linear codes is generally nontrivial and difficult.
To overcome this difficulty, entanglement-assisted
quantum error-correcting codes (EAQECCs)
were proposed by \citet{brun06},
which remove the dual-containing requirement
at the cost of shared pairs of maximally entangled quantum states between
encoders and decoders.
Later, EAQECCs were generalized to $p$-ary codes \citep{luo17}
and $p^m$-ary codes \citep{galindo19},
where $p$ is a prime integer.
Studies on reducing the number of measured codeword qudits
were initiated very recently as
entanglement-assisted quantum local recovery
\citep{EAQLR-gret,EAQLR-r,EAQLR-r2,eaqlrc26}.
However, the reduction of the number of measured observables
was not considered in those papers.

The simultaneous reduction of observables and measured codeword qudits
has been clarified only for entanglement-unassisted quantum stabilizer codes
constructed by Hermitian dual-containing linear codes,
and the remaining cases have been left unexplored.
To fill this void,
after reviewing relevant results in Section \ref{sec2},
in Section \ref{sec3} of this paper, we propose a linear algebraic procedure
with $O(|K|^3)$ complexity
for finding a reduced set $J$ of measured codeword qudits and
a set $O$ of observables
for a given set $I$ of erasures, its repair group $K \supsetneq I$,
and an $\mathbf{F}_q$-linear space $C \subsetneq \mathbf{F}_q^{2n}$
defining a general EAQECC, where $n$ is the code length of the EAQECC.
Then the number of observables $|O|$ is proved to be the minimum possible value,
and an upper bound on $|J|$ is also provided.
This upper bound is not a closed-form formula in terms of only
$C$, $I$, and $K$.
In Section \ref{sec4},
based on the upper bound in Section \ref{sec3},
closed-form upper bounds on $|J|-|I|$ are given for
EAQECCs constructed from Euclidean or Hermitian orthogonality.
By using those closed-form bounds,
in Definitions \ref{def:ilE} and \ref{def:ilH}
we also propose quantum counterparts of
information locality \citep[Definition 2]{kamath14}
that generalize the definition in \citep{ilocality26}.

\section{Review of EAQECCs and their local recovery}\label{sec2}
Let $q$ be a prime power
and $\mathbf{F}_q$ the finite field
with $q$ elements. We consider $q$-ary quantum error-correcting
codes (QECCs) and entanglement-assisted QECCs (EAQECCs).
By $n$ we denote a fixed integer $\geq 2$.
For two vectors $\vec{x} = (a_1,\ldots, a_n | b_1, \ldots, b_n)$
and $\vec{y} = (c_1, \ldots, c_n | d_1, \ldots, d_n)$,
by $\langle \vec{x}, \vec{y} \rangle_s$
we denote their symplectic inner product
\begin{equation*}
  \langle \vec{x}, \vec{y} \rangle_s = \sum_{i=1}^n a_i d_i - b_i c_i,
\end{equation*}
and by $w_s(\vec{x})$ its symplectic weight
\begin{equation*}
  w_s(\vec{x}) = |\{ i : (a_i, b_i) \neq (0,0) \}|.
\end{equation*}
By $\operatorname{supp}(\vec{x})$ we denote
the support $\{ i : (a_i, b_i) \neq (0,0) \}$.
The support $\operatorname{supp}(S)$ of a nonempty set $S$ of vectors
is the union of supports of vectors in $S$.
For $K \subseteq \{ 1, \ldots, n\}$, $n' \geq n$
and $\vec{a} = (a_1, \ldots, a_{n'}) \in \mathbf{F}_q^{n'}$,
by $\pi_K(\vec{a})$ we denote the projection to $K$
\begin{equation*}
  \pi_K(\vec{a}) = (a_i)_{i \in K}.
\end{equation*}
For $C_{n'} \subseteq \mathbf{F}_q^{n'}$ and $C_{2n'} \subseteq \mathbf{F}_q^{2n'}$,
we denote the punctured codes of $C_{n'}$ and $C_{2n'}$ onto $K$ by
\begin{align*}
  \pi_K(C_{n'}) & = \{ \pi_K(\vec{a}) : \vec{a} \in C_{n'} \},\\
  \pi_K(C_{2n'}) & = \{ (\pi_K(\vec{a})|\pi_K(\vec{b})) : (\vec{a}|\vec{b}) \in C_{2n'} \},
\end{align*}
and 
we denote the shortened codes of $C_{n'}$ and $C_{2n'}$ onto $K$ by
\begin{align*}
  \sigma_K(C_{n'}) & = \{ \pi_K(\vec{a}) : \vec{a}=(a_1, \ldots, a_{n'}) \in C_{n'},
  \text{ for } i \notin K, a_i = 0 \},\\
  \sigma_K(C_{2n'}) & = \{ (\pi_K(\vec{a})|\pi_K(\vec{b})) : \\
  & 
  (\vec{a}|\vec{b})  = (a_1, \ldots, a_{n'} | b_1, \ldots, b_{n'}) \in C_{2n'},
    \text{ for } i \notin K, (a_i,b_i) = (0,0) \}
\end{align*}
It was noted in \citep{galindo19} that
\begin{equation}\label{eq:dual}
  \begin{split}
  \pi_K(C)^{\perp s} &= \sigma_K(C^{\perp s}),\\
  \sigma_K(C)^{\perp s} &= \pi_K(C^{\perp s}).
  \end{split}
\end{equation}

One can define an EAQECC from any $C \subseteq \mathbf{F}_q^{2n}$
as follows \citep{brun06,luo17,galindo19}.
For simplicity we always assume $C \supsetneq C \cap C^{\perp s}$,
where $C^{\perp s}$ denotes the orthogonal space of $C$ with respect to
the previously defined symplectic inner product.
Let $2c = \dim C^{\perp s} - \dim C \cap C^{\perp s}$.
It was shown that $c$ is always an integer and we assume $c>0$
unless otherwise stated.
There exists $C' \subseteq \mathbf{F}_q^{2n + 2c}$
such that $\sigma_{\{1, \ldots, n\}}(C') = C$,
$\pi_{\{1, \ldots, n\}}((C')^{\perp s}) = C^{\perp s}$
and $C' \supseteq (C')^{\perp s}$.
In \citep[Proof of Theorem 3.4]{eaqlrc26},
it was shown in terms of linear spaces
over finite fields
that there exists an $\mathbf{F}_p$-linear epimorphism
\begin{equation*}
  \alpha : C^{\perp s} \rightarrow \mathbf{F}_q^{2c}
\end{equation*}
such that
\begin{equation}
  \begin{split}
  (C')^{\perp s} &= \{ (a_1, \ldots, a_{n+c}| b_1, \ldots, b_{n+c}) :\\
  &(a_{n+1}, \ldots, a_{n+c}|b_{n+1},\ldots, b_{n+c})= \alpha(\vec{x}),\\
    &\vec{x}= (a_1, \ldots, a_n|b_1,\ldots, b_n)\in C^{\perp s} \},
\end{split}
  \label{eq2}
\end{equation}
where $p$ is the characteristic of $\mathbf{F}_q$.
Define $\beta(\vec{x}) = (a_1, \ldots, a_{n+c}| b_1, \ldots, b_{n+c})$
for $\vec{x} \in C^{\perp s}$
in (\ref{eq2}). The map $\beta$ is an $\mathbf{F}_p$-linear isomorphism
from $C^{\perp s}$ to $(C')^{\perp s}$.
Particular constructions of $\alpha$ and
$\beta$ were explicitly given and their properties reviewed
here were also proved by \citet{brun06,luo17}
in terms of complex unitary matrices when $q$ is a prime integer.
By $Q(C')$ we denote the $[[n+c, \dim C' - (n+c)]]_q$ quantum stabilizer code
defined by $C' \supseteq (C')^{\perp s}$
\citep{ashikhmin00,calderbank98,gottesman96,ketkar06}.
Also by $Q(C)$ we denote the EAQECC defined by $C$.
Since every codeword in $Q(C)$ is a partial trace of
a codeword in $Q(C')$ and $\dim C = \dim C' - 2c$ \citep{galindo19,eaqlrc26},
$Q(C)$ encodes $c+\dim C - n$ information qudits
into $n$ codeword qudits using
$c$ pairs of maximally entangled quantum states
and has parameters $[[n, c+\dim C - n; c]]_q$.
The distance of $Q(C)$ is $w_s(C \setminus (C \cap C^{\perp s}))$,
where $w_s(S)$ denotes the minimum symplectic weight
of nonzero vectors in a set $S$.
The EAQECC $Q(C)$ can correct $e$ erasures and $t$ errors
if $e + 2t < w_s(C \setminus (C \cap C^{\perp s}))$
and its parameters including distance are denoted by
$[[n, c+\dim C - n, w_s(C \setminus (C \cap C^{\perp s})); c]]_q$.
A decoder for $Q(C)$ concatenates a received quantum state
with $c$ halves of maximally entangled quantum states,
forms $n+c$ qudits and applies a decoder for $Q(C')$ to the $n+c$ qudits.
A basis $B = \{ \vec{x}_1, \ldots, \vec{x}_r \}$ of an $\mathbf{F}_q$-subspace
of $C^{\perp s}$ induces a set of observables defined
by $\{ \beta(\vec{x}_1), \ldots, \beta(\vec{x}_r) \}$ acting
on $n+c$ qudits held by a decoder after reception of a transmitted
codeword consisting of $n$ qudits.
The number of measured observables induced by $B$ is proportional to $|B|$
\citep[Remark 2]{qlrsurface25eprint}
and the set of measured qudits
is
\begin{equation*}
  \bigcup_{i=1}^r \operatorname{supp}(\beta(\vec{x}_i)),
\end{equation*}
including the $c$ qudits preshared by a decoder.

For $\emptyset \neq I \subseteq K \subseteq \{1, \ldots, n\}$,
\citet[Theorem 3.4]{eaqlrc26} showed that erasures at $I$ can be corrected 
by acting on qudits only in $K$ if and only if
\begin{equation}
\begin{split} 
   \sigma_I(\pi_K(C)) &= \sigma_I(C \cap C^{\perp s}),\\
  (\Leftrightarrow)\quad  \pi_I(C+C^{\perp s}) &=   \pi_I (\sigma_K (C^{\perp s})),\\
  \end{split}
  \label{eq1}
\end{equation}
where the equivalence follows from (\ref{eq:dual}).
The set $K$ is called a repair group for locally recovering
erasures in $I$.

\begin{remark}\label{rem:subset1}
  The case $K = I$ was excluded in (\ref{eq1}) in
  \citep[Theorem 3.4]{eaqlrc26}
  and $K \supsetneq I$ was assumed.
  However, its proof remains valid even if $K=I$ and $C \nsupseteq C^{\perp s}$.

  The remaining case $C \supseteq C^{\perp s}$ was considered
  in \citep{qlrc24}, which also excluded the case $K = I$
  and assumed $K \supsetneq I$.
  When (\ref{eq1}) holds with $K=I$ and $C \supseteq C^{\perp s}$,
  we have $\pi_I(C)= \pi_I(C)^{\perp s}=\sigma_I(C^{\perp s})$,
  which means that the stabilizer code
  $Q(\pi_I(C))$ is one-dimensional and has a unique quantum codeword, and 
  that the reduced density matrix on $I$ of every quantum codeword
  in $Q(C)$ is the same pure state which is also a common eigenvector
  of every unitary matrix in the stabilizer group defined by $\pi_I(C)= \pi_I(C)^{\perp s}=\sigma_I(C^{\perp s})$.
  Therefore the stabilizer measurement of $\sigma_I(C^{\perp s})$
  recovers the erased qudits in $I$.
  Conversely, if (\ref{eq1}) does not hold,
  then $\pi_I(C)\supsetneq \pi_I(C)^{\perp s}=\sigma_I(C^{\perp s})$,
  which means that $\dim Q(\pi_I(C)) > 1$ and the quantum state
  at the erased position $I$ cannot be determined uniquely.
  The preceding argument shows that
  the necessary and sufficient condition (\ref{eq1}) remains
  valid even if $K=I$, which was not originally considered in \citep{qlrc24,eaqlrc26}.
\end{remark}

In \citep[Remark 2 and Theorem 5]{qlrsurface25eprint},
for $C^{\perp s} \subsetneq C \subsetneq \mathbf{F}_q^{2n}$
it was proved that a smallest set of observables for identification
of erasures in $I$ is defined by a linear space $D$ such that
\begin{equation*}
  C^{\perp s} = D \oplus \sigma_{\{1, \ldots, n\} \setminus I}(C^{\perp s}).
\end{equation*}
When $C^{\perp s} \nsubseteq C$ and $C'$ is defined as before,
for a quantum stabilizer code $Q(C')$
a smallest set of observables for identification
of erasures in $I$ is defined by a linear space $D'$ such that
\begin{equation*}
  (C')^{\perp s} = D' \oplus \sigma_{\{1, \ldots, n+c\} \setminus I}((C')^{\perp s}). 
\end{equation*}
We also have
\begin{align}
  \dim D' &= \dim (C')^{\perp s} - \dim \sigma_{\{1, \ldots, n+c\} \setminus I}((C')^{\perp s}) \notag\\
  &= \dim \pi_I((C')^{\perp s}) \notag\\
  &= \dim \pi_I(C^{\perp s}), \label{eq:smallest2}
\end{align}
which shows that $\dim D'$ is independent of $C'$ and is a function of only $C$ and $I$.
  
\section{Procedure for computing a smaller repair group
  and finite field vectors defining minimal observables}\label{sec3}
Suppose that we have a repair group $K \subseteq \{1, \ldots, n\}$ for a set
$I \subsetneq K$ of erasures. We do not consider the case $K=I$,
while allowing the reduced repair group to shrink to $J=I$.
We show a procedure for computing a smaller repair group $J \subseteq K$
and a smallest set of finite field vectors that defines observables for correcting
erasures in $I$.
We consider an EAQECC $Q(C)$ defined by $C \subsetneq \mathbf{F}_q^{2n}$
with $C \supsetneq C^{\perp s} \cap C$.
Let $2c = \dim C^{\perp s} - \dim C^{\perp s} \cap C$.
Appendix \ref{app:A} includes a step-by-step execution of the proposed procedure.

In Section \ref{sec3}
every linear space other than $C$ and $C^{\perp s}$
is embedded in $\mathbf{F}_q^{2n}$
by padding zeros to its vectors.
For example, $2(n-|I|)$ zeros are padded to vectors
in $\sigma_I(C^{\perp s})$ so that it can be regarded as a subspace of
$\mathbf{F}_q^{2n}$.
Decompose $\sigma_K(C^{\perp s})$ as a direct sum
\begin{equation*}
  \sigma_K(C^{\perp s}) = \sigma_I(C^{\perp s}) \oplus \sigma_{K \setminus I}(C^{\perp s}) \oplus V.
\end{equation*}
We also define $W = \sigma_{K \setminus I}(C^{\perp s})$.
In Section \ref{sec41}
we will choose $W$ as a subspace of $\sigma_{K \setminus I}(C^{\perp s})$.
Let $B_I$, $B_{K\setminus I}$ and  $B_V$ be matrices whose rows consist
of bases of $\sigma_I(C^{\perp s})$,
$\sigma_{K \setminus I}(C^{\perp s})$ and $V$, respectively.
Under the assumption that a basis for $\sigma_K(C^{\perp s})$
is precomputed for each repair group $K$,
the matrices $B_I$, $B_{K\setminus I}$ and  $B_V$ can be computed in $O(|K|^3)$
steps from a basis for $\sigma_K(C^{\perp s})$.

Each  qudit $\ell \in K \setminus I$ corresponds to
two symplectic coordinates in $\mathbf{F}_q^{2n}$,
namely column $\ell$ and column $n+\ell$.
To eliminate qudit $\ell$
entirely from the measurement process,
both coordinates must be eliminated simultaneously.
We find a subset $P \subseteq K \setminus I$
satisfying $\dim \pi_P(W) = 2|P|$ via
the following incremental column-reduction greedy procedure:
\begin{enumerate}
\item Compute a matrix $B_W$ whose rows consist of a basis of $W$.
  Initialize $P \leftarrow \emptyset$, $C_{\mathrm{sel}} \leftarrow \emptyset$,
  the current rank $r \leftarrow 0$, and maintain a matrix $U \in \mathbf{F}_q^{r \times \dim W}$ in row echelon form whose rows form a basis of the row space of $(B_W|_{C_{\mathrm{sel}}})^T$, where $B_{W}|_{C_{\mathrm{sel}}}$ denotes the submatrix of $B_W$ consisting of columns in $C_{\mathrm{sel}}$.
    \item For each qudit index $\ell \in K \setminus I$:
    \begin{enumerate}
        \item Let $\vec{c}_\ell, \vec{c}_{n+\ell} \in \mathbf{F}_q^{\dim W}$ denote columns $\ell$ and $n+\ell$ of $B_{W}$.
        \item Reduce $\vec{c}_\ell$ against the $r$ existing pivot rows of $U$. If the remainder is a non-zero vector $\vec{c'}_\ell$, reduce $\vec{c}_{n+\ell}$ against $\operatorname{rows}(U) \cup \{\vec{c'}_\ell\}$. Otherwise, discard qudit $\ell$ and proceed to the next qudit.
        \item If the remainder $\vec{c''}_{n+\ell}$ of $\vec{c}_{n+\ell}$ after reduction against $\operatorname{rows}(U) \cup \{\vec{c'}_\ell\}$ satisfies $\vec{c''}_{n+\ell} \neq \vec{0}$, then  accept qudit $\ell$, update
        \[
        P \leftarrow P \cup \{\ell\}, \quad C_{\mathrm{sel}} \leftarrow C_{\mathrm{sel}} \cup \{\ell, n+\ell\}, \quad r \leftarrow r + 2,
        \]
        and update $U$ with the two new echelonized rows.
        Otherwise, discard qudit $\ell$ and proceed to the next qudit.
    \end{enumerate}
    \item Output $P$ and $C_{\mathrm{sel}}$.
\end{enumerate}

We discuss the complexity of the above step for identifying $P$.
While computing the rank of a general matrix of size up to $2|K| \times 2|K|$ from scratch requires $O(|K|^3)$ operations, testing candidate qudits incrementally avoids from-scratch elimination.
    At any iteration, the existing selected columns $C_{\mathrm{sel}}$ span an $r$-dimensional space, maintained in row echelon form as an $r \times \dim W$ matrix $U$.
    Testing whether candidate qudit $\ell$ increases the rank by $2$ requires reducing the two candidate column vectors $\vec{c}_\ell, \vec{c}_{n+\ell} \in \mathbf{F}_q^{\dim W}$ against the $r$ existing pivot rows of $U$,
    which can be done within $O(|K|^2)$ field operations.
    Therefore, each candidate qudit takes strictly $O(|K|^2)$ operations. Across all $|K \setminus I|$ candidate qudits, the identification of $P$ requires $O(|K|^3)$ field operations.

Since $\pi_P(W) = \mathbf{F}_q^{2|P|}$, the submatrix $B_{W}|_{C_{\mathrm{sel}}}$ has full column rank. We eliminate coordinates on $C_{\mathrm{sel}}$ from $B_V$ by subtracting a linear combination of vectors in $W$:
\begin{equation}\label{eq:elimination_step}
\widetilde{B}_V = B_V - M B_{W},
\end{equation}
where $M \in \mathbf{F}_q^{\dim V \times \dim W}$ ensures that $\widetilde{B}_V$
has zero columns at indices in $C_{\mathrm{sel}}$.
Because $\pi_I(W)$ is zero,
$\pi_I(\langle \widetilde{B}_V \rangle) = \pi_I(\langle B_V \rangle)$, where
$\langle B_V \rangle$ denotes the row space of $B_V$.

Define the reduced repair group
\begin{equation}
J = I \cup \bigcup_{\vec{u} \in \operatorname{rows}(\widetilde{B}_V)} \operatorname{supp}(\vec{u}) \subseteq K \setminus P. \label{eq:j_definition}
\end{equation}
We now have $I \subseteq J \subseteq K$.
Let $\widetilde{V}=\langle \widetilde{B}_V \rangle$.

\begin{theorem}\label{thm1}
  For a given repair group $K$ for a set $I$ of erasures, we have the following:
  \begin{enumerate}
    \item\label{l1} The reduced repair group $J$ and basis vectors of $\sigma_I(C^{\perp s}) \oplus \widetilde{V}$ are computed by the procedure given in Section \ref{sec3} in $O(|K|^3)$ complexity, under the assumption that a basis for $\sigma_K(C^{\perp s})$
is precomputed for each repair group $K$.
    \item\label{l4} The size $|J|$ is upper bounded by $|K| - |P|$.
    \item\label{l2} The set $O = \{ \beta(\vec{u}) : \vec{u}$ is a row of $B_I$ or $\widetilde{B}_V \}$ defines a sufficient set of observables for correcting erasures at $I$,
      and $J$ is the set of measured qudits among $n$
      codeword qudits by a decoder.
    \item\label{l3} The set of observables $O$ has the smallest possible size $|O| = \dim \pi_I(C^{\perp s})$ for
      correcting erasures in $I$ by the EAQECC $Q(C)$ (with any repair group).
  \end{enumerate}
\end{theorem}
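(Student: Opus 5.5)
We prove the four items in the order (\ref{l1}), (\ref{l4}), (\ref{l2}), (\ref{l3}). Items (\ref{l1}) and (\ref{l4}) are bookkeeping; the substance lies in (\ref{l2}) and (\ref{l3}).

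\emph{Items (\ref{l1}) and (\ref{l4}).} For (\ref{l1}) we add up the costs already discussed.
\begin{itemize}
\item Extracting $B_I$, $B_{K\setminus I}$ and $B_V$ from the precomputed basis of $\sigma_K(C^{\perp s})$ is Gaussian elimination on matrices of size $O(|K|)\times O(|K|)$, hence $O(|K|^3)$.
\item The greedy selection of $P$ costs $O(|K|^2)$ per candidate qudit, hence $O(|K|^3)$ in total.
\item Finding $M$ in (\ref{eq:elimination_step}) amounts to solving $\dim V$ linear systems with the common coefficient matrix $B_W|_{C_{\mathrm{sel}}}$, which has full column rank. One factorization plus $O(|K|)$ back-substitutions keeps this at $O(|K|^3)$.
\item Computing the product $MB_W$, reading off the supports of the rows of $\widetilde B_V$, and forming $J$ are each within $O(|K|^3)$.
\end{itemize}
For (\ref{l4}), every row of $\widetilde B_V$ and of $B_I$ vanishes on the coordinates $\ell$ and $n+\ell$ for each $\ell\in P$. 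This holds for $\widetilde B_V$ by the choice of $M$, and for $B_I$ because $P\cap I=\emptyset$. Every row is also supported inside $K$. Hence $J\subseteq K\setminus P$, and since $P\subseteq K$ we get $|J|\le |K|-|P|$.

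\emph{Item (\ref{l2}).} Set $\widetilde D=\sigma_I(C^{\perp s})\oplus \widetilde V$.
\begin{enumerate}
\item We first show that this sum is direct and that $\pi_I(\widetilde D)=\pi_I(\sigma_K(C^{\perp s}))$. The key fact is that $\widetilde V\subseteq V+W$ and that $\widetilde B_V$ has the same number of rows as $B_V$. These rows remain independent modulo $\sigma_I(C^{\perp s})\oplus W$, because $B_V$ differs from $\widetilde B_V$ by a combination of rows of $B_W$. It follows that
\[
\sigma_K(C^{\perp s})=\sigma_I(C^{\perp s})\oplus W\oplus\widetilde V .
\]
Since $\pi_I(W)=0$, we get $\pi_I(\widetilde D)=\pi_I(\sigma_K(C^{\perp s}))$.
\item By the hypothesis (\ref{eq1}) for $K$, this common projection equals $\pi_I(C+C^{\perp s})$. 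Since $\widetilde D\subseteq\sigma_J(C^{\perp s})$, we obtain $\pi_I(\sigma_J(C^{\perp s}))\supseteq\pi_I(C+C^{\perp s})$. The reverse inclusion is automatic, so (\ref{eq1}) holds with $K$ replaced by $J$.
\item This shows $J$ is itself a repair group. Remark~\ref{rem:subset1} covers the boundary case $J=I$.
\item It remains to show that measuring the observables $\beta(\widetilde D)$ suffices, not merely all of $\sigma_J(C^{\perp s})$. For this we argue as in \citep[Theorem 5]{qlrsurface25eprint}. Consider $D'=\beta(\widetilde D)$, or more precisely $\beta(\widetilde D)$ completed by a complement inside $\sigma_{\{1,\ldots,n+c\}\setminus I}((C')^{\perp s})$. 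We need $\pi_I(D')=\pi_I((C')^{\perp s})$. On the first $n$ coordinates $\beta$ is the identity, and $I$ lies among those coordinates. Hence $\pi_I(\beta(\widetilde D))=\pi_I(\widetilde D)=\pi_I(C+C^{\perp s})\supseteq\pi_I(C^{\perp s})$. This means $\beta(\widetilde D)$ together with $\sigma_{\{1,\ldots,n+c\}\setminus I}((C')^{\perp s})$ spans $(C')^{\perp s}$, which is exactly the sufficiency criterion quoted in Section~\ref{sec2}.
\item The measured codeword qudits are $\bigcup\operatorname{supp}(\beta(\vec u))\cap\{1,\ldots,n\}=\bigcup\operatorname{supp}(\vec u)$, which is $J$ by (\ref{eq:j_definition}).
\end{enumerate}

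\emph{Item (\ref{l3}) and the main obstacle.} We need $\dim\widetilde D=\dim\pi_I(C^{\perp s})$, which together with (\ref{eq:smallest2}) gives optimality, since any sufficient set must have at least $\dim D'$ elements. The inclusion $\pi_I(C^{\perp s})\subseteq\pi_I(C+C^{\perp s})=\pi_I(\widetilde D)$ runs the wrong way for this, so equality of the projections must be established separately. We also need $\pi_I$ to be injective on $\widetilde V$ modulo $\sigma_I$, that is, $\widetilde V\cap\ker\pi_I=0$. This is where the choice of $V$ matters. Any $\vec v\in V$ with $\pi_I(\vec v)=0$ lies in $\sigma_{K\setminus I}(C^{\perp s})=W$, so directness forces $\vec v=0$. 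The same then holds for $\widetilde V$, because the identity $\sigma_K(C^{\perp s})=\sigma_I(C^{\perp s})\oplus W\oplus\widetilde V$ from step~1 shows $\widetilde V\cap W=0$. Thus
\[
\dim\widetilde D=\dim\sigma_I(C^{\perp s})+\dim\pi_I(\widetilde V)=\dim\pi_I(\sigma_K(C^{\perp s})).
\]
The expected main obstacle is reconciling $\pi_I(\sigma_K(C^{\perp s}))$, which equals $\pi_I(C+C^{\perp s})$, with $\pi_I(C^{\perp s})$. On the coordinates in $I$ the two agree because the $C$-part of $C+C^{\perp s}$ is detected through the entanglement coordinates. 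I would try to prove this equality first by passing to $C'$ through (\ref{eq2}) and using $\sigma(C')=C$ together with $C'\supseteq(C')^{\perp s}$. If that fails, I would directly verify $\pi_I(C)\subseteq\pi_I(C^{\perp s})$ under (\ref{eq1}) by a duality argument using (\ref{eq:dual}).
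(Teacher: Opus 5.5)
Your Items~\ref{l1}, \ref{l4} and \ref{l2} are fine and follow the paper's reasoning in more detail; for example, you check explicitly that $J$ itself satisfies (\ref{eq1}). The gap is in Item~\ref{l3}. You never prove $\pi_I(\sigma_K(C^{\perp s}))=\pi_I(C^{\perp s})$. Instead you label it the ``main obstacle'' and only sketch possible detours through $C'$ or a duality argument. Without this equality your count gives only $|O|=\dim\pi_I(\sigma_K(C^{\perp s}))=\dim\pi_I(C+C^{\perp s})$, which a priori is merely $\geq\dim\pi_I(C^{\perp s})$. So it does not meet the lower bound (\ref{eq:smallest2}), and neither the formula $|O|=\dim\pi_I(C^{\perp s})$ nor the optimality claim is established.

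The missing step takes one line, and it is exactly how the paper argues. After padding with zeros, $\sigma_K(C^{\perp s})$ is a subspace of $C^{\perp s}$, so
\[
\pi_I(\sigma_K(C^{\perp s}))\subseteq \pi_I(C^{\perp s})\subseteq \pi_I(C+C^{\perp s})=\pi_I(\sigma_K(C^{\perp s})),
\]
where the last equality is (\ref{eq1}). Hence all three spaces coincide. You already used this containment in step~2 of Item~\ref{l2}, where you called the reverse inclusion ``automatic''. So no passage to $C'$ is needed. Your heuristic about entanglement coordinates is also not the reason: the equality, equivalently $\pi_I(C)\subseteq\pi_I(C^{\perp s})$, follows directly from $K$ being a repair group.

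With this equality in hand, your dimension count finishes the job:
\[
\dim\widetilde D=\dim\sigma_I(C^{\perp s})+\dim\widetilde V=\dim\pi_I(\sigma_K(C^{\perp s}))=\dim\pi_I(C^{\perp s}),
\]
which matches (\ref{eq:smallest2}). The paper reaches the same count via rank--nullity, $\dim\sigma_K(C^{\perp s})=\dim W+\dim\pi_I(\sigma_K(C^{\perp s}))$, together with $\dim\widetilde V=\dim V$.
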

\begin{proof}
  Claims \ref{l1} and \ref{l4}
  have been confirmed in Section \ref{sec3}.
  To verify Claim \ref{l2}, recall that
  as formally proved in \citep{qlrsurface25eprint},
  erasure correcting decoders do not need to measure observables
  that have the identity components at all erased qudits.
Since the procedure in Section \ref{sec3}
 eliminates all vectors in $\sigma_{K \setminus I}(C^{\perp s})$
 from $\sigma_K(C^{\perp s})$ and obtains $\sigma_I(C^{\perp s}) \oplus \widetilde{V}$ that is spanned by rows of $B_I$ and $\widetilde{B}_V$,
 the reduced set of observables defined by
  $O = \{ \beta(\vec{u}) : \vec{u}$ is a row of $B_I$ or $\widetilde{B}_V \}$
  is sufficient for correcting erasures at $I$.
  Note also that those observables defined by $O$ only measure qudits in $J$
  among $n$ codeword qudits. We have proved Claim \ref{l2}.

  To prove Claim \ref{l3}, note that
  since $\sigma_{K \setminus I}$ gives the kernel of the projection $\pi_I$,
  we have
  \begin{equation*}
    \dim \sigma_K(C^{\perp s}) = \dim \sigma_{K \setminus I}(C^{\perp s}) +
    \dim \pi_I (\sigma_K(C^{\perp s})).
  \end{equation*}
  By also using (\ref{eq1}) we have
  \begin{equation*}
    \pi_I(\sigma_K(C^{\perp s})) \subseteq \pi_I(C^{\perp s}) \subseteq \pi_I(C + C^{\perp s}) = \pi_I(\sigma_K(C^{\perp s})),
  \end{equation*}
  which implies $\pi_I(\sigma_K(C^{\perp s})) = \pi_I(C^{\perp s})$.
  Therefore we have
  \begin{align*}
    |O| & = \dim \sigma_I(C^{\perp s}) + \dim \widetilde{V}\\
    &= \dim \sigma_I(C^{\perp s}) + \dim V\\
    &=\dim \sigma_K(C^{\perp s}) - \dim \sigma_{K \setminus I}(C^{\perp s})\\
    &= \dim \pi_I (C^{\perp s}),
  \end{align*}
  which is the smallest possible size by (\ref{eq:smallest2}).
\end{proof}  

\begin{remark}\label{rem:unassisted}
  For the entanglement-unassisted case $C^{\perp s} \subsetneq C \subsetneq \mathbf{F}_q^{2n}$, we have $c=0$ because $C \cap C^{\perp s} = C^{\perp s}$.
  Consequently, $C' = C$, $(C')^{\perp s} = C^{\perp s}$, and $\beta = \operatorname{id}_{C^{\perp s}}$ in Section \ref{sec2}, which yields $\beta(\vec{u}) = \vec{u}$.
  Thus, the observable set in Item \ref{l2} of Theorem \ref{thm1} becomes
  \begin{equation*}
    O = \{ \vec{u} : \vec{u} \in \operatorname{rows}(B_I) \cup \operatorname{rows}(\widetilde{B}_V) \}.
  \end{equation*}
  Here $Q(C)$ is an entanglement-unassisted quantum stabilizer code, and the minimality of $|O|$ in Item \ref{l3} is justified directly by \citep[Theorem 5]{qlrsurface25eprint}.
  The algorithmic procedure in Section \ref{sec3} and the bound $|J| \leq |K| - |P|$ in Theorem \ref{thm1} remain valid without change.
\end{remark}

\begin{remark}\label{rem:AI1}
  The size $|P|$ depends on the order of indices
  and it can fail to attain
  \begin{equation}
    \max_{P' \subseteq (K \setminus I)} \{ |P'| : \dim \pi_{P'}(\sigma_{K \setminus I}(C^{\perp s})) = 2|P'| \}, \label{eq:maxp}
  \end{equation}
  attainment of which would decrease the upper bound $|K| - |P|$ on $|J|$ further.
  One can find an example for which the procedure in Section \ref{sec3}
  produces $P$ failing to attain (\ref{eq:maxp}),
  as shown in Appendix \ref{app:A}.
  When $C$ comes from an $\mathbf{F}_{q^2}$-linear space in $\mathbf{F}_{q^2}^n$,
  the maximum (\ref{eq:maxp}) is always attained as discussed in Section \ref{sec42}.

  Note that finding $\max |P'|$ such that $\dim \pi_{P'}(W) = 2|P'|$ is an instance of the linear matroid parity problem (matroid matching) on
  the pairs of column vectors $\{\{\vec{c}_\ell, \vec{c}_{n+\ell}\} : \ell \in K \setminus I\}$,
  which \citet{stallmann84} solve.
  By using \citep{stallmann84}, one can attain the maximum (\ref{eq:maxp}) with
  $O(|K|^4)$ complexity.
  Nevertheless, even when $P$ attains the maximum (\ref{eq:maxp}),
  the size of the computed set $J$ can be strictly larger than the minimum possible size
  of a repair group for correcting erasures in $I$.
  One can find an example in which a maximizing set $P$ attaining (\ref{eq:maxp})
  does not make the size of $J$ the smallest,
  as shown in Appendix \ref{app:A}.

  Computation of a smallest repair group for a given $C$, $I$, and $K$
  is at least as hard as finding a nonzero vector of minimum Hamming weight
  in a linear code, which is NP-hard \citep{vardy97}
  (see also \citep{dumer03} for hardness of approximation).
  The reason is as follows. Suppose that one wants to find a
  nonzero vector of minimum Hamming weight
  in a linear code $C_X \subsetneq \mathbf{F}_q^n$.
  Assume without loss of generality that $C_X$ has no all-zero coordinate positions (which can be verified in $O(n^2)$ time), so that a valid repair group $K_i$ exists for every $I=\{i\}$.
  Let $K=\{1, \ldots, n\}$.
For each $i \in K$, one finds a smallest repair group $K_i$ for the EAQECC
  $Q(C_X^{\perp e} \times C_X^{\perp e})$ and the set $I=\{i\}$ of erasures, and then finds a nonzero $\vec{x}_i \in C_X$
  whose support is exactly equal to $K_i$ by solving the resulting homogeneous system of
  $\dim C_X^{\perp e}$ linear equations in $|K_i|$ unknowns,
    where $C_X^{\perp e}$ denotes the Euclidean dual of $C_X$.
    A nonzero vector of minimum Hamming weight in $C_X$ is obtained by selecting any $\vec{x}_i \in \{\vec{x}_1, \ldots, \vec{x}_n\}$ that minimizes the Hamming
    weight of $\vec{x}_i$.
  Therefore, if there exists a deterministic algorithm for finding a smallest
  repair group for an arbitrarily given $C \subsetneq \mathbf{F}_q^{2n}$,
  $I$, and $K$ with $O(f(n))$ complexity,
  then computation of a nonzero vector of minimum Hamming weight in
  an arbitrary linear code
  of length $n$ can be done with $O(n f(n) + n^4)$ complexity.
\end{remark}

\section{Quantum error-correcting codes from Euclidean and Hermitian orthogonality
  and their information locality}\label{sec4}
For a classical locally recoverable code $C$,
its repair group $K$ and a set $I \subseteq K$ of erasures,
one can find a smaller repair group $J$ with $I \subseteq J \subseteq K$
such that $|J|-|I| \leq \dim \pi_K(C)$.
The maximum value of $\dim \pi_K(C)$ over all repair groups $K$ is called
the information locality \citep[Definition 2]{kamath14},
while the number of additional codeword symbols for correcting erasures in $I$
was upper bounded by a looser quantity $|K|-|I|$ originally in \citep{prakash12}.
The superiority of information locality over the conventional
symbol locality \citep[Definition 10]{qlrc24}
was also demonstrated by an explicit example
in the quantum setting \citep{ilocality26}.
However, the notion of information locality for quantum local recovery was clarified only for
Hermitian orthogonality without entanglement assistance in \citep{ilocality26}.
In this section, we prove closed-form upper bounds on $|J|-|I|$
for the repair group $J$
computed in Section \ref{sec3} for Euclidean and Hermitian orthogonality
with or without entanglement assistance,
and we propose their information locality based on the upper bounds.

\subsection{Quantum error-correcting codes from Euclidean orthogonality}\label{sec41}
Let $C_X$ and $C_Z$ be $\mathbf{F}_q$-linear subspaces
of $\mathbf{F}_q^n$ and set $C = C_X \times C_Z$.
We have
$C^{\perp s} = C_Z^{\perp e} \times C_X^{\perp e}$,
where $^{\perp e}$ denotes the Euclidean dual.
We assume $C \supsetneq C^{\perp s} \cap C$ as before.
The EAQECC $Q(C_X \times C_Z)$ has parameters
$[[n, c+\dim C_X + \dim C_Z - n, d; c]]_q$,
where $d$ is the minimum Hamming weight of
$(C_X \setminus C_Z^{\perp e}) \cup (C_Z \setminus C_X^{\perp e})$
and $c = \dim C_Z^{\perp e} - \dim C_Z^{\perp e} \cap C_X = \dim C_X^{\perp e} - \dim C_X^{\perp e} \cap C_Z$.

We slightly modify the procedure in Section \ref{sec3}
applied to $C^{\perp s} = C_Z^{\perp e} \times C_X^{\perp e}$ as follows.
Let $W_{XZ} = \sigma_{K \setminus I}( C_X^{\perp e} \cap C_Z^{\perp e})$
and $B_{WXZ}$ be a generator matrix of $W_{XZ}$.
We use the procedure in Section \ref{sec3} in which $B_W$ is replaced by
\begin{equation*}
  \left(\begin{array}{cc}
  B_{WXZ} & \mathbf{O} \\
  \mathbf{O} & B_{WXZ}
  \end{array}\right).
\end{equation*}
Then we have $|P| = \dim W_{XZ}$.
Let $W_{XZ}^{\mathrm{all}} =  C_X^{\perp e} \cap C_Z^{\perp e}$.
We also have
\begin{align}
  \dim W_{XZ} &= \dim \sigma_{K \setminus I}(W_{XZ}^{\mathrm{all}})\notag\\
  &= (|K| - |I|) - \dim \pi_{K \setminus I}((W_{XZ}^{\mathrm{all}})^{\perp e})  \notag\\
  &= (|K|-|I|) - (\dim \pi_K((W_{XZ}^{\mathrm{all}})^{\perp e})-\dim \sigma_I(\pi_K((W_{XZ}^{\mathrm{all}})^{\perp e}))),\label{eq102}
\end{align}
where the fact that $\ker(\pi_{K\setminus I})$ is given by $\sigma_I$ is used in
deriving (\ref{eq102}).
Combined with $|P| = \dim W_{XZ}$
and $(W_{XZ}^{\mathrm{all}})^{\perp e} = C_X+C_Z$,
we have:
\begin{proposition}
  For the size of the repair group $J$ computed in Section \ref{sec41},
  the reduction $|K|-|J|$ of the repair group size is lower bounded as
  \begin{equation*}
    |K|-|J|  \geq |P| = \dim \sigma_{K \setminus I}(C_X^{\perp e} \cap C_Z^{\perp e}),
  \end{equation*}
  and the number $|J|-|I|$ of additional codeword qudits for local recovery
  is upper bounded as
  \begin{equation}
    |J|-|I| \leq \dim \pi_K(C_X+C_Z)
    - \dim \sigma_I(\pi_K(C_X+C_Z)).
    \label{eq:boundJE}
  \end{equation}
  The bound (\ref{eq:boundJE}) can be further simplified to
  \begin{equation}
    |J|-|I| \leq \dim \pi_K(C_X)
    - \dim \sigma_I(C_X^{\perp e}\cap C_X),
    \label{eq:boundJE3}
  \end{equation}
  by using (\ref{eq1})
    for the special case $C_X = C_Z \subsetneq \mathbf{F}_q^n$.
The specialized bound (\ref{eq:boundJE3})
coincides with \citep{ilocality26} for the entanglement-unassisted case
$C_X^{\perp e} \subsetneq C_X \subsetneq \mathbf{F}_q^n$.
  The minimality of observables proved in Item \ref{l3} of Theorem \ref{thm1} is
  also retained by the modified procedure.
  \qed
\end{proposition}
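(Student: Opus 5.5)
The plan is to combine the general bound $|J| \leq |K|-|P|$ from Item \ref{l4} of Theorem \ref{thm1} with an exact count of $|P|$ for the modified block-diagonal $B_W$, and then rewrite $\dim W_{XZ}$ through (\ref{eq102}).

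\emph{Step 1: compute $|P|$.} Write $G=B_{WXZ}$, a $\dim W_{XZ}\times n$ matrix of full row rank, supported in $K\setminus I$. With the block-diagonal $B_W$, qudit $\ell$ contributes the columns $(\vec{g}_\ell,\vec{0})^T$ and $(\vec{0},\vec{g}_\ell)^T$. These lie in complementary coordinate blocks. Hence, for any $P'$, $\dim \pi_{P'}(W)=2|P'|$ holds if and only if the columns $\{\vec{g}_\ell:\ell\in P'\}$ are linearly independent. Reducing a new pair against the current echelon basis works separately in each block, and both remainders are nonzero exactly when $\vec{g}_\ell$ lies outside the span of the previously accepted $\vec{g}$'s. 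So the greedy procedure is the standard greedy algorithm for a column basis of $G$, and it returns $|P|=\operatorname{rank} G=\dim W_{XZ}$. I also need this $W$ to be legitimate, i.e.\ contained in $\sigma_{K\setminus I}(C^{\perp s})$. This holds because $W_{XZ}\times W_{XZ}\subseteq \sigma_{K\setminus I}(C_Z^{\perp e})\times\sigma_{K\setminus I}(C_X^{\perp e})$. Taking $W$ smaller than $\sigma_{K\setminus I}(C^{\perp s})$ is exactly the freedom announced in Section \ref{sec3}.

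\emph{Step 2: check that Theorem \ref{thm1} still applies.} The decomposition becomes $\sigma_K(C^{\perp s})=\sigma_I(C^{\perp s})\oplus\sigma_{K\setminus I}(C^{\perp s})\oplus V$. The elimination (\ref{eq:elimination_step}) subtracts vectors of $W$, which vanish on $I$. Therefore $\pi_I(\widetilde V)=\pi_I(V)$ and $\dim\widetilde V=\dim V$, and the counting in Item \ref{l3} is unchanged. Together with Item \ref{l2}, this gives sufficiency and minimality of $O$, and $J\subseteq K\setminus P$ gives $|K|-|J|\geq |P|$.

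\emph{Step 3: the bound (\ref{eq:boundJE}).} Substitute (\ref{eq102}) into $|J|-|I|\leq |K|-|I|-|P|$ and use $(C_X^{\perp e}\cap C_Z^{\perp e})^{\perp e}=C_X+C_Z$. The identity in (\ref{eq102}) only needs the duality $\sigma_{K\setminus I}(D)^{\perp e}=\pi_{K\setminus I}(D^{\perp e})$, the Euclidean analogue of (\ref{eq:dual}), together with $\ker \pi_{K\setminus I}|_{\pi_K(\cdot)}=\sigma_I(\pi_K(\cdot))$.

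\emph{Step 4: the case $C_X=C_Z$.} Here $C=C_X\times C_X$ and $C^{\perp s}=C_X^{\perp e}\times C_X^{\perp e}$. The first form of (\ref{eq1}) reads $\sigma_I(\pi_K(C_X))\times\sigma_I(\pi_K(C_X))=\sigma_I(C_X\cap C_X^{\perp e})\times\sigma_I(C_X\cap C_X^{\perp e})$. This gives $\sigma_I(\pi_K(C_X))=\sigma_I(C_X^{\perp e}\cap C_X)$, and (\ref{eq:boundJE3}) follows. In the unassisted case $C_X^{\perp e}\subsetneq C_X$, the result reduces to the formula of \citep{ilocality26}.

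The main obstacle I expect is Step 1. One must argue carefully that greedy acceptance in the paired matroid coincides with ordinary column independence of $G$ here, so that greedy attains $\operatorname{rank}G$ and does not fall short as described in Remark \ref{rem:AI1}. The block structure, with each pair $\{\vec{c}_\ell,\vec{c}_{n+\ell}\}$ consisting of copies of $\vec{g}_\ell$ in separate blocks, is what makes this work. I would verify it by writing $U$ explicitly as a block-diagonal echelon form and reducing each block separately.
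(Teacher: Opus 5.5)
Your proposal is correct and takes essentially the paper's route: combine $|J|\le |K|-|P|$ from Theorem~\ref{thm1} with $|P|=\dim W_{XZ}$, rewrite $\dim W_{XZ}$ via (\ref{eq102}) and $(C_X^{\perp e}\cap C_Z^{\perp e})^{\perp e}=C_X+C_Z$, and specialize with (\ref{eq1}) when $C_X=C_Z$. Your block-diagonal argument that the greedy step accepts exactly a column basis of $B_{WXZ}$ (so $|P|=\dim W_{XZ}$), and your check that $W_{XZ}\times W_{XZ}\subseteq\sigma_{K\setminus I}(C^{\perp s})$, fill in details that the paper only asserts.
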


\begin{remark}\label{rem:comparison_ref30}
  For a single erasure $I = \{i\}$ (i.e., $|I| = 1$) and the trivial repair group $K = \{1, \ldots, n\}$,
  the rank-nullity theorem gives
  $\dim \pi_K(C_X+C_Z) - \dim \sigma_I(\pi_K(C_X+C_Z)) = \dim \pi_{K \setminus I}(C_X+C_Z) \leq |K \setminus I| = n - 1$.
  Since we also have $\dim \pi_{K \setminus I}(C_X+C_Z) \leq \dim(C_X+C_Z) = \dim C_X + \dim C_Z - \dim C_X \cap C_Z $,
  the upper bound (\ref{eq:boundJE}) implies
  $|J| - 1 \leq \dim \pi_{K \setminus I}(C_X+C_Z) \leq \min\{n-1, \dim C_X + \dim C_Z - \dim C_X \cap C_Z \}$,
  which recovers \citep[Theorem 3]{EAQLR-r}.
\end{remark}

\begin{definition}\label{def:ilE}
  For an entanglement-assisted or unassisted
  quantum error-correcting code $Q(C_X \times C_Z)$ with a collection $\mathcal{K}$
  of repair groups, its information locality is defined by
  \begin{equation*}
    \max_{K \in \mathcal{K}} \dim \pi_K(C_X+C_Z).
  \end{equation*}
\end{definition}

\subsection{Quantum error-correcting codes from Hermitian orthogonality}\label{sec42}
Let $\{ \omega, \omega^q\}$ be a normal basis of $\mathbf{F}_{q^2}$ over
$\mathbf{F}_q$, and define an $\mathbf{F}_q$-linear map
$\iota$ sending $(a,b) \in \mathbf{F}_q^2$ to $a \omega + b \omega^q \in \mathbf{F}_{q^2}$.
Let $D$ be an $\mathbf{F}_{q^2}$-linear subspace
of $\mathbf{F}_{q^2}^n$ and set $C = \iota^{-1}(D)$.
We have
$C^{\perp s} = \iota^{-1}(D^{\perp h})$,
where $^{\perp h}$ denotes the Hermitian dual.
We assume $C \supsetneq C^{\perp s} \cap C$ as before.
The EAQECC $Q(\iota^{-1}(D))$ has parameters
$[[n, c+2\dim_{\mathbf{F}_{q^2}} D  - n, d; c]]_q$,
where $d$ is the minimum Hamming weight of
$D\setminus D^{\perp h}$ and
$c = \dim_{\mathbf{F}_{q^2}} D^{\perp h} - \dim_{\mathbf{F}_{q^2}} D^{\perp h} \cap D$.

We slightly modify the procedure in Section \ref{sec3}
applied to $C^{\perp s} = \iota^{-1}(D^{\perp h})$ as follows.
For any $P' \subseteq K \setminus I$,
we have
\begin{equation*}
  \dim_{\mathbf{F}_q} \sigma_{P'}(\iota^{-1}(D^{\perp h})) =
  2 \dim_{\mathbf{F}_{q^2}} \sigma_{P'}(D^{\perp h})
\end{equation*}
because $D$ is assumed to be $\mathbf{F}_{q^2}$-linear.
Let $P_h$ be a set of pivot column indices
of a reduced row echelon form of a generator matrix of
$\sigma_{K\setminus I}(D^{\perp h})$.
Then $|P_h| = \dim_{\mathbf{F}_{q^2}} \sigma_{K \setminus I} (D^{\perp h})$.
In the identification step of $P$ in Section \ref{sec3},
the modified version just uses $P=P_h$ and defines $C_{\mathrm{sel}}$ accordingly.
The rest of the procedure remains unchanged.

We also have
\begin{align}
  &\dim_{\mathbf{F}_{q^2}} \sigma_{K \setminus I} (D^{\perp h}) \notag\\
  &= (|K| - |I|) - \dim_{\mathbf{F}_{q^2}} \pi_{K\setminus I}(D)\notag\\
  &= (|K|-|I|) - (\dim_{\mathbf{F}_{q^2}} \pi_K(D) - \dim_{\mathbf{F}_{q^2}} \sigma_I(\pi_K(D)))\label{eq201}\\
  &= (|K|-|I|) - (\dim_{\mathbf{F}_{q^2}} \pi_K(D) - \dim_{\mathbf{F}_{q^2}} \sigma_I(D^{\perp h} \cap D)) \label{eq202}
\end{align}
where the fact that $\ker(\pi_{K\setminus I})$ is given by $\sigma_I$ is used in
deriving (\ref{eq201}) and
(\ref{eq202}) follows from (\ref{eq1}).
Combined with $|P| = \dim_{\mathbf{F}_{q^2}} \sigma_{K \setminus I} (D^{\perp h})$,
we have:
\begin{proposition}
  For the size of the repair group $J$ computed in Section \ref{sec42},
  the reduction $|K|-|J|$ of the repair group size is lower bounded as
  \begin{equation*}
    |K|-|J| \geq \dim_{\mathbf{F}_{q^2}}\sigma_{K \setminus I}(D^{\perp h}),
  \end{equation*}
  and the number $|J|-|I|$ of additional codeword qudits for local recovery
  is upper bounded as
  \begin{equation}
    |J|-|I|  \leq \dim_{\mathbf{F}_{q^2}} \pi_K(D)
    - \dim_{\mathbf{F}_{q^2}} \sigma_I(D^{\perp h} \cap D),
    \label{eq:boundJH}
  \end{equation}
  which coincides with \citep{ilocality26}
  for the entanglement-unassisted case $D^{\perp h} \subsetneq D \subsetneq \mathbf{F}_{q^2}^n$.
  The minimality of observables proved in Item \ref{l3} of Theorem \ref{thm1} is
  also retained by the modified procedure.
  \qed
\end{proposition}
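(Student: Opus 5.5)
The plan is to reduce everything to Theorem \ref{thm1}, with one new ingredient: the modified choice $P=P_h$ must be a legitimate output of the selection step. That is, $P_h$ must satisfy $\dim_{\mathbf{F}_q}\pi_{P_h}(W)=2|P_h|$ for $W=\sigma_{K\setminus I}(\iota^{-1}(D^{\perp h}))$. Once this holds, the elimination step (\ref{eq:elimination_step}) and the definition (\ref{eq:j_definition}) of $J$ apply verbatim. They give $J\subseteq K\setminus P_h$, hence $|K|-|J|\geq |P_h|$. The minimality of $|O|$ then follows from Item \ref{l3} of Theorem \ref{thm1}, whose proof never used how $P$ was chosen.

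To verify the full-rank property, take a generator matrix $G$ of $\sigma_{K\setminus I}(D^{\perp h})$ in reduced row echelon form over $\mathbf{F}_{q^2}$. Let its pivot columns be $P_h$, with $|P_h|=\dim_{\mathbf{F}_{q^2}}\sigma_{K\setminus I}(D^{\perp h})$. Then $\pi_{P_h}(\sigma_{K\setminus I}(D^{\perp h}))=\mathbf{F}_{q^2}^{|P_h|}$. Since $\iota$ is an $\mathbf{F}_q$-linear bijection $\mathbf{F}_q^2\to\mathbf{F}_{q^2}$ applied coordinatewise, it commutes with $\pi_{P_h}$ and with $\sigma_{K\setminus I}$. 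Hence $\pi_{P_h}(W)=\iota^{-1}(\mathbf{F}_{q^2}^{|P_h|})=\mathbf{F}_q^{2|P_h|}$. So the columns $C_{\mathrm{sel}}=P_h\cup(n+P_h)$ of $B_W$ have full column rank, and the matrix $M$ in (\ref{eq:elimination_step}) exists.

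The upper bound follows by chaining the displayed identities (\ref{eq201}) and (\ref{eq202}). We have $|J|-|I|\leq |K|-|I|-|P_h|$. Substituting $|P_h|=(|K|-|I|)-\dim_{\mathbf{F}_{q^2}}\pi_K(D)+\dim_{\mathbf{F}_{q^2}}\sigma_I(D^{\perp h}\cap D)$ gives (\ref{eq:boundJH}). The routine pieces behind (\ref{eq201}) and (\ref{eq202}) are:
\begin{enumerate}
\item The first line uses $\sigma_{K\setminus I}(D^{\perp h})=\pi_{K\setminus I}(D)^{\perp h}$. This is the Hermitian analogue of (\ref{eq:dual}), obtained by transporting (\ref{eq:dual}) through $\iota$, since $\iota^{-1}(D^{\perp h})=\iota^{-1}(D)^{\perp s}$ holds for every $\mathbf{F}_{q^2}$-linear $D$, and in particular for punctured codes.
\item The second line is rank–nullity for $\pi_{K\setminus I}$ restricted to $\pi_K(D)$, whose kernel is $\sigma_I(\pi_K(D))$.
\item The third line uses (\ref{eq1}) in the form $\sigma_I(\pi_K(C))=\sigma_I(C\cap C^{\perp s})$. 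Pulled back by $\iota$, it becomes $\sigma_I(\pi_K(D))=\sigma_I(D\cap D^{\perp h})$, with $\mathbf{F}_q$-dimensions equal to twice the $\mathbf{F}_{q^2}$-dimensions.
\end{enumerate}

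For the comparison with \citep{ilocality26}, specialize to $D^{\perp h}\subsetneq D$. Then $D^{\perp h}\cap D=D^{\perp h}$, and the bound reads $\dim\pi_K(D)-\dim\sigma_I(D^{\perp h})$, which is the expression in that reference.

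The main obstacle is the compatibility of $\iota$ with symplectic/Hermitian duality under puncturing and shortening. The whole reduction rests on the relation $\iota^{-1}(E^{\perp h})=\iota^{-1}(E)^{\perp s}$ for every $\mathbf{F}_{q^2}$-linear $E$ (applied to $D$ and to its punctured codes). I would take this as given from the paper's setup, $C^{\perp s}=\iota^{-1}(D^{\perp h})$. I would check it via the trace form: the symplectic form, up to a nonzero scalar, equals $\operatorname{Tr}_{\mathbf{F}_{q^2}/\mathbf{F}_q}$ of a suitable scalar multiple of the Hermitian form under the normal basis. Dimension counting then upgrades the resulting inclusion to equality.
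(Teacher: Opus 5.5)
Your proposal is correct and follows the paper's argument. You take $P=P_h$ from the pivot columns of $\sigma_{K\setminus I}(D^{\perp h})$, obtain $|J|\le |K|-|P_h|$ from the elimination step, substitute (\ref{eq201})--(\ref{eq202}), and inherit the minimality of $|O|$ from Item \ref{l3} of Theorem \ref{thm1}. The steps the paper leaves implicit are filled in correctly: your check that $\pi_{P_h}(W)=\mathbf{F}_q^{2|P_h|}$ via the coordinatewise map $\iota$, and the transport of (\ref{eq:dual}) and (\ref{eq1}) through $\iota$.
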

\begin{definition}\label{def:ilH}
  For an entanglement-assisted or unassisted
  quantum error-correcting code $Q(\iota^{-1}(D))$ with a collection $\mathcal{K}$
  of repair groups, its information locality is defined by
  \begin{equation*}
    \max_{K \in \mathcal{K}} \dim_{\mathbf{F}_{q^2}} \pi_K(D).
  \end{equation*}
\end{definition}

\section*{Declaration of AI use}
The initial version of the procedure in Section \ref{sec3}
was generated by an AI tool (Google Gemini 3.8 Flash)
in response to the author's prompt requesting
the unification of bounds (\ref{eq:boundJE}) and (\ref{eq:boundJH}) previously obtained by the author.
This initial version was improved through further mathematical analysis by the author.
The literature citations  in Remark \ref{rem:AI1} were provided with AI assistance.
The paper title  and Appendix \ref{app:A} were generated using generative AI based on preliminary drafts of this paper.
The author 
takes full responsibility for the content of this paper, including
the title and Appendix \ref{app:A}.

\appendix
\section{Explicit construction of an example for Remark \ref{rem:AI1}}\label{app:A}
\subsection{Parameters}
We set:
\begin{itemize}
    \item Finite field: $\mathbf{F}_2$ ($q = 2$).
    \item Code length: $n = 5$.
    \item Ambient space: $\mathbf{F}_2^{10}$ with coordinates $(a_1, a_2, a_3, a_4, a_5 \mid b_1, b_2, b_3, b_4, b_5)$.
    \item Erasure set: $I = \{1\}$.
    \item Initial repair group: $K = \{1, 2, 3, 4, 5\}$, so $K \setminus I = \{2, 3, 4, 5\}$.
\end{itemize}

\subsection{Generator matrices of $C^{\perp s}$ and $C$}
We specify the 6-dimensional dual subspace $C^{\perp s} \subseteq \mathbf{F}_2^{10}$ by the generator matrix $G_{C^{\perp s}} \in \mathbf{F}_2^{6 \times 10}$:
\begin{equation*}
G_{C^{\perp s}} = \begin{pmatrix}
\vec{r}_1 \\ \vec{r}_2 \\ \vec{r}_3 \\ \vec{r}_4 \\ \vec{r}_5 \\ \vec{r}_6
\end{pmatrix} = \begin{pmatrix}
0 & 0 & 0 & 0 & 0 & 1 & 0 & 0 & 0 & 0 \\
1 & 0 & 0 & 1 & 0 & 0 & 0 & 0 & 0 & 0 \\
0 & 1 & 1 & 0 & 0 & 0 & 0 & 0 & 0 & 0 \\
0 & 0 & 0 & 0 & 0 & 0 & 0 & 1 & 0 & 1 \\
0 & 0 & 0 & 1 & 1 & 0 & 1 & 0 & 0 & 0 \\
0 & 0 & 0 & 0 & 0 & 0 & 0 & 0 & 1 & 0
\end{pmatrix}.
\end{equation*}
The code space $C = (C^{\perp s})^{\perp s} \subseteq \mathbf{F}_2^{10}$ has dimension $10 - 6 = 4$ and is generated by $G_C \in \mathbf{F}_2^{4 \times 10}$:
\begin{equation*}
G_C = \begin{pmatrix}
\vec{g}_1 \\ \vec{g}_2 \\ \vec{g}_3 \\ \vec{g}_4
\end{pmatrix} = \begin{pmatrix}
0 & 0 & 1 & 0 & 1 & 0 & 0 & 0 & 0 & 0 \\
0 & 0 & 0 & 0 & 0 & 0 & 1 & 1 & 0 & 0 \\
0 & 1 & 0 & 0 & 0 & 1 & 0 & 0 & 1 & 0 \\
0 & 1 & 0 & 0 & 0 & 0 & 0 & 0 & 0 & 1
\end{pmatrix}.
\end{equation*}

\begin{lemma}[Orthogonality and Code Parameters]
The spaces $C$ and $C^{\perp s}$ satisfy $C = (C^{\perp s})^{\perp s}$ and $C \cap C^{\perp s} = \{\vec{0}\}$. The code $Q(C)$ is an EAQECC with parameters $[[n, k; c]]_q = [[5, 2; 3]]_2$.
\end{lemma}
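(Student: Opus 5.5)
The plan is to verify the three assertions directly from the two generator matrices $G_{C^{\perp s}}$ and $G_C$ by finite linear algebra over $\mathbf{F}_2$, in the order: (i) $C=(C^{\perp s})^{\perp s}$, (ii) $C\cap C^{\perp s}=\{\vec 0\}$, (iii) the parameters via the formulas of Section~\ref{sec2}. Over $\mathbf{F}_2$, for $\vec x=(\vec a|\vec b)$ and $\vec y=(\vec c|\vec d)$ the symplectic product is $\langle \vec x,\vec y\rangle_s=\vec a\cdot\vec d+\vec b\cdot\vec c$. Each product therefore reduces to counting overlaps of supports.

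For (i), I first check that the rows of $G_{C^{\perp s}}$ are linearly independent, so $\dim C^{\perp s}=6$. This is immediate because the rows have visibly independent pivot positions: $b_1$ for $\vec r_1$, $a_1$ for $\vec r_2$, $a_2$ for $\vec r_3$, $b_3$ for $\vec r_4$, $a_5$ for $\vec r_5$, $b_4$ for $\vec r_6$. Similarly the four rows $\vec g_j$ are independent. Next I verify the $24$ products $\langle \vec g_j,\vec r_i\rangle_s=0$, i.e.\ $G_C\,\Omega\,G_{C^{\perp s}}^T=0$ with $\Omega$ the symplectic form. This gives $\langle G_C\rangle\subseteq (C^{\perp s})^{\perp s}$. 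Since $\dim (C^{\perp s})^{\perp s}=10-6=4=\dim\langle G_C\rangle$ by nondegeneracy of the symplectic form, equality follows.

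For (ii), rather than stacking a $10\times 10$ matrix, I show that the symplectic form restricted to $C$ is nondegenerate. Indeed, $C\cap C^{\perp s}$ is precisely the radical of the form on $C$. Writing $\vec g_1=(\{3,5\}|\emptyset)$, $\vec g_2=(\emptyset|\{2,3\})$, $\vec g_3=(\{2\}|\{1,4\})$ and $\vec g_4=(\{2\}|\{5\})$ by supports, the Gram matrix $(\langle \vec g_i,\vec g_j\rangle_s)$ is the alternating matrix with $m_{12}=m_{14}=m_{23}=m_{24}=1$ and $m_{13}=m_{34}=0$. Its Pfaffian is $m_{12}m_{34}+m_{13}m_{24}+m_{14}m_{23}=1$, so the matrix is invertible and the radical is zero. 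As a consequence, $C\supsetneq C\cap C^{\perp s}$, as the paper assumes.

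For (iii), I substitute into Section~\ref{sec2}: $2c=\dim C^{\perp s}-\dim(C\cap C^{\perp s})=6-0$, so $c=3$, and $k=c+\dim C-n=3+4-5=2$. This gives $[[5,2;3]]_2$. The only step with real error risk is the $24$ orthogonality checks in (i). I would organize them as one matrix product, or check each $\vec g_j$ against the pivot-revealing rows (many products vanish trivially by disjoint supports), rather than treat it as a conceptual obstacle.
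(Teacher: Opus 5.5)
Your proposal is correct. Steps (i) and (iii) match the paper's proof: the $24$ vanishing symplectic products, the ranks $6$ and $4$ with dimension counting to get $C=(C^{\perp s})^{\perp s}$, and then $c=3$, $k=3+4-5=2$. I checked your private coordinates for both generator matrices, the $24$ products, and the Gram entries $m_{12}=m_{14}=m_{23}=m_{24}=1$, $m_{13}=m_{34}=0$; all are right.

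The genuinely different part is step (ii).
\begin{itemize}
\item The paper stacks $G_C$ on top of $G_{C^{\perp s}}$ and uses Gaussian elimination to get rank $10$. This gives $\dim(C+C^{\perp s})=10$ and hence $\dim(C\cap C^{\perp s})=4+6-10=0$.
\item You identify $C\cap C^{\perp s}$ with the radical of $\langle\cdot,\cdot\rangle_s$ restricted to $C$. You then show that the $4\times 4$ alternating Gram matrix of $\vec g_1,\ldots,\vec g_4$ is nonsingular, because its Pfaffian is $m_{12}m_{34}+m_{13}m_{24}+m_{14}m_{23}=0+0+1=1$, so $\det=\mathrm{Pf}^2=1$ over $\mathbf{F}_2$.
\end{itemize}

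Your route uses only $G_C$, together with the fact from (i) that its rows form a basis of $C$. It replaces a $10\times 10$ elimination by three products of support overlaps. It also makes explicit that $C$ is a nondegenerate symplectic subspace, which is exactly why $2c=\dim C^{\perp s}$ here.

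The paper's route is more mechanical: it needs only the dimension formula for $C+C^{\perp s}$, with no appeal to radicals or Pfaffians.
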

\begin{proof}
Direct evaluation of the symplectic inner products between all basis rows shows:
\begin{itemize}
    \item $\langle \vec{g}_1, \vec{r}_i \rangle_s = r_{i, b_3} + r_{i, b_5} \equiv 0 \pmod 2$ for all $i \in \{1,\ldots,6\}$;
    \item $\langle \vec{g}_2, \vec{r}_i \rangle_s = r_{i, a_2} + r_{i, a_3} \equiv 0 \pmod 2$ for all $i \in \{1,\ldots,6\}$;
    \item $\langle \vec{g}_3, \vec{r}_i \rangle_s = r_{i, b_2} + r_{i, a_1} + r_{i, a_4} \equiv 0 \pmod 2$ for all $i \in \{1,\ldots,6\}$;
    \item $\langle \vec{g}_4, \vec{r}_i \rangle_s = r_{i, b_2} + r_{i, a_5} \equiv 0 \pmod 2$ for all $i \in \{1,\ldots,6\}$.
\end{itemize}
Thus $C \subseteq (C^{\perp s})^{\perp s}$. Since $\operatorname{rank}(G_{C^{\perp s}}) = 6$ and $\operatorname{rank}(G_C) = 4$, dimension counting yields $C = (C^{\perp s})^{\perp s}$. Gaussian elimination confirms that $\operatorname{rank} \begin{pmatrix} G_C \\ G_{C^{\perp s}} \end{pmatrix} = 10$, proving $C \cap C^{\perp s} = \{\vec{0}\}$. The number of entanglement pairs is $c = \frac{1}{2}(6 - 0) = 3$, and that of encoded qudits is $k = 3 + 4 - 5 = 2$.
\end{proof}

\subsection{Subspace decompositions}
We partition the basis rows of $G_{C^{\perp s}}$ into the direct sum components:
\begin{enumerate}
    \item $\sigma_I(C^{\perp s})$: Any vector supported solely on $I = \{1\}$ must have zeros at all coordinates for qudits $2, 3, 4, 5$. Only $\vec{r}_1 = (0,0,0,0,0 \mid 1,0,0,0,0)$ qualifies:
    \begin{equation*}
    B_I = \vec{r}_1, \quad \dim \sigma_I(C^{\perp s}) = 1.
    \end{equation*}
    \item $W = \sigma_{K \setminus I}(C^{\perp s})$: Vectors with zero coordinates on $I = \{1\}$ ($a_1 = b_1 = 0$) are spanned by $\{\vec{r}_3, \vec{r}_4, \vec{r}_5, \vec{r}_6\}$:
    \begin{equation*}
    B_W = \begin{pmatrix}
    \vec{r}_3 \\ \vec{r}_4 \\ \vec{r}_5 \\ \vec{r}_6
    \end{pmatrix} = \begin{pmatrix}
    0 & 1 & 1 & 0 & 0 & 0 & 0 & 0 & 0 & 0 \\
    0 & 0 & 0 & 0 & 0 & 0 & 0 & 1 & 0 & 1 \\
    0 & 0 & 0 & 1 & 1 & 0 & 1 & 0 & 0 & 0 \\
    0 & 0 & 0 & 0 & 0 & 0 & 0 & 0 & 1 & 0
    \end{pmatrix}, \quad \dim W = 4.
    \end{equation*}
    \item $V$: The row $\vec{r}_2 = (1,0,0,1,0 \mid 0,0,0,0,0)$ satisfies $a_1 = 1 \neq 0$ and $a_4 = 1 \neq 0$. It is linearly independent of $\sigma_I(C^{\perp s}) \oplus W$, providing the complement:
    \begin{equation*}
    B_V = \vec{r}_2, \quad \dim V = 1.
    \end{equation*}
\end{enumerate}

\subsection{Execution of the greedy column reduction}

We trace the greedy procedure of Section \ref{sec3} applied to $B_W$ across candidate helper qudits $\ell \in K \setminus I = \{2, 3, 4, 5\}$ in increasing index order.

The column pairs $(\vec{c}_\ell, \vec{c}_{5+\ell}) \in \mathbf{F}_2^4 \times \mathbf{F}_2^4$ of $B_W$ are:
\begin{align*}
\ell = 2: & \quad \vec{c}_2 = (1, 0, 0, 0)^T, \quad \vec{c}_7 = (0, 0, 1, 0)^T; \\
\ell = 3: & \quad \vec{c}_3 = (1, 0, 0, 0)^T, \quad \vec{c}_8 = (0, 1, 0, 0)^T; \\
\ell = 4: & \quad \vec{c}_4 = (0, 0, 1, 0)^T, \quad \vec{c}_9 = (0, 0, 0, 1)^T; \\
\ell = 5: & \quad \vec{c}_5 = (0, 0, 1, 0)^T, \quad \vec{c}_{10} = (0, 1, 0, 0)^T.
\end{align*}

\subsection{Step-by-step algorithmic trace}
\begin{enumerate}
    \item \textbf{Candidate $\ell = 2$:}
    \begin{itemize}
        \item Initially, $U$ is empty ($r = 0$).
        \item $\vec{c}_2 = (1, 0, 0, 0)^T \neq \vec{0}$, so remainder $\vec{c'}_2 = (1, 0, 0, 0)^T$.
        \item $\vec{c}_7 = (0, 0, 1, 0)^T$ is linearly independent of $\vec{c'}_2$, so remainder $\vec{c''}_7 = (0, 0, 1, 0)^T \neq \vec{0}$.
        \item \textbf{Decision:} Accept $\ell = 2$.
        \item \textbf{Update:}
        \begin{equation*}
        P \leftarrow \{2\}, \quad C_{\mathrm{sel}} \leftarrow \{2, 7\}, \quad r \leftarrow 2, \quad \operatorname{rows}(U) = \left\{ (1, 0, 0, 0), \, (0, 0, 1, 0) \right\}.
        \end{equation*}
    \end{itemize}

    \item \textbf{Candidate $\ell = 3$:}
    \begin{itemize}
        \item Reduce $\vec{c}_3 = (1, 0, 0, 0)^T$ against $\operatorname{rows}(U)$.
        \item Since $(1, 0, 0, 0)^T$ is already the first row of $U$, the remainder is $\vec{c'}_3 = \vec{0}$.
        \item \textbf{Decision:} Reject qudit $\ell = 3$ and discard.
    \end{itemize}

    \item \textbf{Candidate $\ell = 4$:}
    \begin{itemize}
        \item Reduce $\vec{c}_4 = (0, 0, 1, 0)^T$ against $\operatorname{rows}(U)$.
        \item Since $(0, 0, 1, 0)^T$ is already the second row of $U$, the remainder is $\vec{c'}_4 = \vec{0}$.
        \item \textbf{Decision:} Reject qudit $\ell = 4$ and discard.
    \end{itemize}

    \item \textbf{Candidate $\ell = 5$:}
    \begin{itemize}
        \item Reduce $\vec{c}_5 = (0, 0, 1, 0)^T$ against $\operatorname{rows}(U)$.
        \item The remainder is $\vec{c'}_5 = \vec{0}$.
        \item \textbf{Decision:} Reject qudit $\ell = 5$ and discard.
    \end{itemize}
\end{enumerate}

\noindent \textbf{Output of Greedy Procedure:}
\begin{equation*}
P_{\mathrm{greedy}} = \{2\}, \quad |P_{\mathrm{greedy}}| = 1.
\end{equation*}

\subsection{Proof of suboptimality properties}

\subsubsection{Property 1: $P_{\mathrm{greedy}}$ fails to maximize $|P|$}

\begin{proposition}
The greedy procedure output $P_{\mathrm{greedy}} = \{2\}$ does not attain the maximum in \eqref{eq:maxp}. Specifically:
\begin{equation*}
\max_{P' \subseteq K \setminus I} \{ |P'| : \dim \pi_{P'}(W) = 2|P'| \} = 2 > |P_{\mathrm{greedy}}|.
\end{equation*}
\end{proposition}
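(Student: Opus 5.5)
We need to show that the maximum equals 2 and exceeds $|P_{\mathrm{greedy}}|=1$. The plan has two parts, a lower bound and an upper bound, both working with the column pairs $(\vec{c}_\ell,\vec{c}_{5+\ell})$ of $B_W$ listed above. Note that $\dim \pi_{P'}(W)$ equals the rank of the submatrix of $B_W$ on the columns $\{\ell, 5+\ell : \ell\in P'\}$.

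\emph{Lower bound.} We exhibit $P'=\{3,4\}$. The relevant columns are $\vec{c}_3=(1,0,0,0)^T$, $\vec{c}_8=(0,1,0,0)^T$, $\vec{c}_4=(0,0,1,0)^T$ and $\vec{c}_9=(0,0,0,1)^T$. These are the four standard basis vectors of $\mathbf{F}_2^4$, so the rank is $4=2|P'|$. Hence the maximum is at least 2.

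\emph{Upper bound.} Since $\dim W=4$, we always have $\dim\pi_{P'}(W)\le 4$. This forces $2|P'|\le 4$, so $|P'|\le 2$ for every admissible $P'$. Together with the lower bound, the maximum equals 2.

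Finally, we compare with the trace of the procedure, which gave $|P_{\mathrm{greedy}}|=1<2$. The only point needing care is that the greedy algorithm accepted $\ell=2$ first. This choice used the columns $(1,0,0,0)^T$ and $(0,0,1,0)^T$, which are exactly $\vec{c}_3$ and $\vec{c}_4$. After that, every later candidate has a first column already in the span, so it is rejected. This shows the failure comes from the processing order, not from infeasibility. There is no real obstacle here: the argument is a direct rank computation together with the dimension bound $\dim W=4$.
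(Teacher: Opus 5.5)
Your proof is correct and matches the paper's argument: you exhibit $P'=\{3,4\}$, whose columns $\vec{c}_3,\vec{c}_8,\vec{c}_4,\vec{c}_9$ form $I_4$, and you use $\dim W=4$ to bound $|P'|\le \frac{1}{2}\dim W=2$. Your added remark, that accepting $\ell=2$ first already uses up the span of $\vec{c}_3$ and $\vec{c}_4$ (and $\vec{c}_5=\vec{c}_4$), correctly explains why the processing order causes the failure, although the proof does not need it.
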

\begin{proof}
Choose $P' = \{3, 4\} \subseteq K \setminus I$. The corresponding column set is $C' = \{3, 8, 4, 9\}$. Extracting these columns from $B_W$ yields:
\begin{equation*}
B_W|_{C'} = \begin{pmatrix}
c_{1,3} & c_{1,8} & c_{1,4} & c_{1,9} \\
c_{2,3} & c_{2,8} & c_{2,4} & c_{2,9} \\
c_{3,3} & c_{3,8} & c_{3,4} & c_{3,9} \\
c_{4,3} & c_{4,8} & c_{4,4} & c_{4,9}
\end{pmatrix} = \begin{pmatrix}
1 & 0 & 0 & 0 \\
0 & 1 & 0 & 0 \\
0 & 0 & 1 & 0 \\
0 & 0 & 0 & 1
\end{pmatrix} = I_4.
\end{equation*}
The matrix has full rank $\dim \pi_{\{3, 4\}}(W) = 4 = 2 \times | P'| $. Because $| K \setminus I|  = 4$ and $\dim W = 4$, no subset $P'$ can have size larger than $\frac{1}{2} \dim W = 2$. Thus the global maximum is 2, whereas the greedy procedure yields $| P_{\mathrm{greedy}}|  = 1$.
\end{proof}

\subsubsection{Property 2: The maximizing set $P_{\max}$ yields a suboptimal repair group}

We now select the optimal maximizing subset $P_{\max} = \{3, 4\}$ achieving $| P_{\max}|  = 2$, and execute the coordinate elimination step \eqref{eq:elimination_step} to compute its reduced repair group $J(P_{\max})$.

\begin{proposition}
The reduced repair group produced by the maximizing set $P_{\max} = \{3, 4\}$ is $J(P_{\max}) = \{1, 2, 5\}$, which has size $| J(P_{\max})|  = 3$. However, there exists a valid repair group $J_{\min} = \{1, 4\}$ for $I$ of size $| J_{\min}|  = 2$.
\end{proposition}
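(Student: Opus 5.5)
The plan is to run the elimination step (\ref{eq:elimination_step}) explicitly with $P = P_{\max} = \{3,4\}$ and $C_{\mathrm{sel}} = \{3,8,4,9\}$. Then read off $J$ from (\ref{eq:j_definition}). Finally, verify separately, using the criterion (\ref{eq1}), that $\{1,4\}$ is a repair group for $I=\{1\}$.

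\textbf{Step 1 (elimination).} By the previous proposition, $B_W|_{C_{\mathrm{sel}}}$ is the $4\times 4$ identity matrix, with columns ordered $3,8,4,9$. So the matrix $M \in \mathbf{F}_2^{1\times 4}$ in (\ref{eq:elimination_step}) is uniquely determined: its entries are the entries of $B_V = \vec{r}_2$ at columns $3,8,4,9$, namely $(0,0,1,0)$. Hence $M$ selects only the third row $\vec{r}_5$ of $B_W$, and
\begin{equation*}
\widetilde{B}_V = \vec{r}_2 - \vec{r}_5 = (1,0,0,0,1 \mid 0,1,0,0,0).
\end{equation*}
I will check directly that this vector vanishes on columns $3,8,4,9$, as required. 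Its qudit support is $\{1,2,5\}$, since $a_1=1$, $b_2=1$ and $a_5=1$. By (\ref{eq:j_definition}), $J(P_{\max}) = \{1\}\cup\{1,2,5\} = \{1,2,5\}$, so $|J(P_{\max})|=3$. Since $M$ is forced, this $J$ does not depend on any choice made in the procedure.

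\textbf{Step 2 ($\{1,4\}$ is a repair group).} I use the second form of (\ref{eq1}) with $K$ replaced by $J_{\min}=\{1,4\}$, as justified by Remark \ref{rem:subset1}. By the preceding lemma, $C\cap C^{\perp s}=\{\vec 0\}$ and $\dim C+\dim C^{\perp s}=10$. Therefore $C+C^{\perp s}=\mathbf{F}_2^{10}$, and the left-hand side is $\pi_I(C+C^{\perp s})=\mathbf{F}_2^2$. For the right-hand side, both $\vec{r}_1$ and $\vec{r}_2$ are supported on $\{1,4\}$, so both lie in $\sigma_{\{1,4\}}(C^{\perp s})$. Their projections onto $I$ are $(0\mid 1)$ and $(1\mid 0)$, which span $\mathbf{F}_2^2$. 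The two sides agree, so (\ref{eq1}) holds and $J_{\min}$ is a valid repair group of size $2<3$.

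There is no real obstacle here; both steps are finite computations over $\mathbf{F}_2$. The one point needing care is the identity $C+C^{\perp s}=\mathbf{F}_2^{10}$, which I take from the rank computation in the preceding lemma.
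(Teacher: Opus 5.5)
Your proposal is correct and follows the paper's proof. Both arguments use the forced row vector $M=(0,0,1,0)$, which gives $\widetilde{B}_V=\vec{r}_2-\vec{r}_5$ with support $\{1,2,5\}$, and both verify (\ref{eq1}) for $\{1,4\}$ via $\vec{r}_1,\vec{r}_2\in\sigma_{\{1,4\}}(C^{\perp s})$. Your derivation of $\pi_I(C+C^{\perp s})=\mathbf{F}_2^2$ from $C\cap C^{\perp s}=\{\vec 0\}$ just makes explicit the equivalence with $\dim\pi_I(\sigma_{J_{\min}}(C^{\perp s}))=2|I|$, which the paper states without proof; the appeal to Remark \ref{rem:subset1} is unnecessary, since $J_{\min}\supsetneq I$.
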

\begin{proof}
For $P_{\max} = \{3, 4\}$, the selected coordinate set is $C_{\mathrm{sel}} = \{3, 8, 4, 9\}$.
We examine $B_V = \vec{r}_2 = (1, 0, 0, 1, 0 \mid 0, 0, 0, 0, 0)$ on coordinates $C_{\mathrm{sel}}$:
\begin{equation*}
B_V|_{C_{\mathrm{sel}}} = (a_3, b_3, a_4, b_4) = (0, 0, 1, 0).
\end{equation*}
Since $B_W| _{C_{\mathrm{sel}}} = I_4$, the row vector $M$ in \eqref{eq:elimination_step} is uniquely determined as:
\begin{equation*}
M = B_V|_{C_{\mathrm{sel}}} (B_W|_{C_{\mathrm{sel}}})^{-1} = (0, 0, 1, 0).
\end{equation*}
Subtracting $M B_W$ from $B_V$ eliminates the third row of $B_W$ (which is $\vec{r}_5$):
\begin{align*}
\widetilde{B}_V &= B_V - \vec{r}_5 \notag\\
&= (1, 0, 0, 1, 0 \mid 0, 0, 0, 0, 0) - (0, 0, 0, 1, 1 \mid 0, 1, 0, 0, 0) \notag\\
&= (1, 0, 0, 0, 1 \mid 0, 1, 0, 0, 0).
\end{align*}
Inspecting the non-zero qudit coordinates of $\widetilde{B}_V$:
\begin{itemize}
    \item Qudit 1: $(a_1, b_1) = (1, 0) \neq (0, 0)$;
    \item Qudit 2: $(a_2, b_2) = (0, 1) \neq (0, 0)$;
    \item Qudit 3: $(a_3, b_3) = (0, 0)$;
    \item Qudit 4: $(a_4, b_4) = (0, 0)$;
    \item Qudit 5: $(a_5, b_5) = (1, 0) \neq (0, 0)$.
\end{itemize}
Thus $\operatorname{supp}(\widetilde{B}_V) = \{1, 2, 5\}$. By formula \eqref{eq:j_definition}, the repair group is:
\begin{equation*}
J(P_{\max}) = I \cup \operatorname{supp}(\widetilde{B}_V) = \{1\} \cup \{1, 2, 5\} = \{1, 2, 5\},
\end{equation*}
having size $| J(P_{\max})|  = 3$.

Now consider the candidate repair group $J_{\min} = \{1, 4\} \subsetneq K$. We evaluate the subspace $\sigma_{J_{\min}}(C^{\perp s})$, consisting of vectors in $C^{\perp s}$ supported exclusively within $\{1, 4\}$:
\begin{enumerate}
    \item $\vec{r}_1 = (0,0,0,0,0 \mid 1,0,0,0,0)$ has support $\{1\} \subseteq J_{\min}$. Its projection onto $I = \{1\}$ is $\pi_I(\vec{r}_1) = (0, 1)$.
    \item $\vec{r}_2 = (1,0,0,1,0 \mid 0,0,0,0,0)$ has support $\{1, 4\} \subseteq J_{\min}$. Its projection onto $I = \{1\}$ is $\pi_I(\vec{r}_2) = (1, 0)$.
\end{enumerate}
Since both $\vec{r}_1$ and $\vec{r}_2$ belong to $\sigma_{J_{\min}}(C^{\perp s})$, their projections span:
\begin{equation*}
\pi_I(\sigma_{J_{\min}}(C^{\perp s})) \supseteq \operatorname{span}\{(0, 1), (1, 0)\} = \mathbf{F}_2^2.
\end{equation*}

When $C \cap C^{\perp s} = \{\vec{0}\}$, the condition (\ref{eq1})
with $K$ replaced by $J_{\min}$ is equivalent to
\begin{equation}\label{eq:full_rank_condition}
\dim \pi_I(\sigma_{J_{\min}}(C^{\perp s})) = 2|I|.
\end{equation}
By condition \eqref{eq:full_rank_condition}, $J_{\min} = \{1, 4\}$ is a valid repair group for $I$. Its size is
\begin{equation*}
|J_{\min}| = 2 < 3 = |J(P_{\max})|.
\end{equation*}
Hence, even attaining the maximum coordinate rank $| P_{\max}| $ fails to minimize the size of the repair group.
\end{proof}


\end{document}